\newtheorem{lema}{Lemma}
\newtheorem{teorema}[lema]{Theorem}
\newtheorem{corolario}[lema]{Corollary}
\newtheorem{conjectura}[lema]{Conjecture}
\begin{document}

\title{An energy function and its application to the periodic behavior of $k$-reversible processes}

\author{Leonardo~I.~L.~Oliveira\\
Valmir~C.~Barbosa\\
\\
Programa de Engenharia de Sistemas e Computa\c c\~ao, COPPE\\
Universidade Federal do Rio de Janeiro\\
Caixa Postal 68511, 21941-972 Rio de Janeiro - RJ, Brazil\\
\\
F\'abio~Protti\thanks{Corresponding author (fabio@ic.uff.br).}\\
\\
Instituto de Computa\c c\~ao\\
Universidade Federal Fluminense\\
Rua Passo da P\'atria, 156, 24210-240 Niter\'oi - RJ, Brazil
}

\date{}

\maketitle

\begin{abstract}
We consider the graph dynamical systems known as $k$-reversible processes. 
In such processes, each vertex in the graph has one of two possible
states at each discrete time step. Each vertex changes its state between the current
time and the next if and only if it currently has at least $k$ neighbors in a
state different than its own. For such processes, we present a monotonic function similar to the decreasing energy functions
used to study threshold networks. Using this new function, we show an alternative proof for the maximum period length in a $k$-reversible process and provide
better upper bounds on the transient length in both the general case and the case of trees.

\bigskip
\noindent
\textbf{Keywords:} $k$-reversible processes, Threshold networks, Energy function, Graph dynamical systems.
\end{abstract}

\newpage
\section{Introduction}

Let $G$ be a simple, undirected, finite graph with $n$ vertices and $m$ edges.
The set of vertices of $G$ is denoted by $V(G) = \{v_1, v_2, \ldots, v_n\}$, its
set of edges by $E(G)$, and its maximum node degree by $\Delta(G)$. A \textit{$k$-reversible} process on $G$
is an iterative process in which, at each discrete time $t$, each vertex in $G$
has one of two possible states. The state of a vertex is represented by an integer
belonging to the set $\{-1,+1\}$ and each vertex has its state changed from
one time to the next if and only if it currently has at least $k$ neighbors in a
state different than its own, where $k$ is a positive integer.

We denote by $x(t)$ the sequence in which the $i$th component, $x_{i}(t)$, is the state of vertex $v_{i}$ at time $t$, and by $\mathit{op}(t)$ the sequence 
in which the $i$th component, $\mathit{op}_{i}(t)$, is the number of neighbors of $v_{i}$ in a state different than its own at time $t$. Any sequence $x(t)$ is called 
a \textit{configuration} and the sequence $x(0)$ is an \textit{initial configuration}. Regarding the periodic behavior of such processes, $p(x(0))$ denotes the length of
the \textit{period} reached after a finite number of steps starting with configuration $x(0)$. The length of the
\textit{transient} is denoted by $\tau(x(0))$. Formally, $p(x(0))$ and $\tau(x(0))$ are integer numbers such that:
\begin{itemize}
\item $x(t+p(x(0))) = x(t)$ for any $t \geq \tau(x(0))$;
\item $x(t+q) \neq x(t)$ for any $t < \tau(x(0))$ or $q < p(x(0))$.
\end{itemize}

The motivation to study $k$-reversible processes is related to the analysis of opinion dissemination in social networks. As graph dynamical systems, their study
is multidisciplinary and related to several other areas, like optics \cite{phase-wrapping}, neural networks \cite{hopfield}, statistical mechanics \cite{bootstrap-example}, and 
opinion \cite{opinion} and disease \cite{disease-example} dissemination. In the work by Dreyer \cite{processosReversiveis}, some important results regarding the periodic behavior of 
$k$-reversible processes are presented; for instance, $\tau(x(0))$ is $O(m + n^2)$ and $p(x(0)) \leq 2$.  Most of these results are based on reductions
from the so-called \textit{threshold networks}, which are broadly studied by Goles and Olivos \cite{golesOlivos, golesOlivos2}. 

It is known that, for all threshold networks, $p(x(0)) \leq 2$ \cite{energyFunction, poljak}. An intuitive approach to prove this result is based on a monotonic function 
called an \textit{energy function}. Its definition is very similar to that of the energy function associated with Hopfield networks \cite{hopfield} and it is a Lyapunov function. This function is used to prove 
several results associated with the period and transient lengths of threshold and majority networks.

The remainder of the paper is organized as follows. In Section~\ref{intr} we describe the new energy function for $k$-reversible processes and some of its properties. In Section~\ref{results} 
we use this function to provide an alternative proof of the maximum length of the period and of the transient length for $k$-reversible processes, for the general case and also for trees. 
Section~\ref{concl} contains our conclusions.

\section{\boldmath An energy function for $k$-reversible processes}\label{intr}

Let $S_1(t)$ and $S_2(t)$ be subsets of $V(G)$, defined as a function of time $t$ as follows:
\begin{equation}
S_{1}(t) = \{ v_i \mid \mathit{op}_i(t) \geq k \} \mbox{ and } S_{2}(t) = \{ v_i \mid \mathit{op}_i(t) < k \}.
\end{equation}
We define a nonnegative energy function $E(t)$ for $k$-reversible processes as follows:
\begin{equation} \label{eq:energia}
E(t) = \sum_{i \in S_{1}(t)}^{} (\mathit{op}_{i}(t) - k) + \sum_{i \in S_{2}(t)}^{} (k - \mathit{op}_{i}(t)). 
\end{equation}
We also define an auxiliary function $E'(t)$ that will be helpful later in this section to prove that $E(t)$ is a monotonically nondecreasing function:
\begin{equation} \label{eq:energia2}
E'(t) = \sum_{i \in S_{1}(t)}^{} (\mathit{op}_{i}(t+1) - k) + \sum_{i \in S_{2}(t)}^{} (k - \mathit{op}_{i}(t+1)). 
\end{equation}

\begin{lema}
\label{lema:lema1}
$E(t) = E'(t)$ for all $t \geq 0$.
\end{lema}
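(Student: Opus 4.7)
The plan is to compute $E(t)-E'(t)$ directly and show that it vanishes. Collecting like terms over the common index sets yields
\begin{equation*}
E(t)-E'(t) \;=\; \sum_{i\in S_1(t)}\bigl(\mathit{op}_i(t)-\mathit{op}_i(t+1)\bigr)\;-\;\sum_{i\in S_2(t)}\bigl(\mathit{op}_i(t)-\mathit{op}_i(t+1)\bigr),
\end{equation*}
so the task reduces to proving that these two sums are equal.

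First I would compute $\mathit{op}_i(t)-\mathit{op}_i(t+1)$ vertex by vertex, using the key observation that a vertex flips its state between $t$ and $t+1$ precisely when it lies in $S_1(t)$; equivalently, the ``different state'' indicator on an edge $\{v_i,v_j\}$ changes iff exactly one of its endpoints is in $S_1(t)$. For each $v_i$, I would classify its neighbors into four groups according to (i) membership in $S_1(t)$ versus $S_2(t)$ and (ii) agreement with $x_i(t)$: let $a_i,b_i$ count the agreeing/disagreeing neighbors inside $S_1(t)$ and let $c_i,d_i$ count the agreeing/disagreeing neighbors inside $S_2(t)$. A short calculation then shows $\mathit{op}_i(t)-\mathit{op}_i(t+1)=d_i-c_i$ when $v_i\in S_1(t)$ (only the edges to non-flipping neighbors change status, and $v_i$'s own flip swaps which of them count toward $\mathit{op}_i$) and $\mathit{op}_i(t)-\mathit{op}_i(t+1)=b_i-a_i$ when $v_i\in S_2(t)$ (only the edges to flipping neighbors change status).

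The proof then concludes with a double-counting argument on the edges in the cut between $S_1(t)$ and $S_2(t)$. Each such edge contributes to $c_i$ for its $S_1(t)$-endpoint iff it contributes to $a_j$ for its $S_2(t)$-endpoint, so $\sum_{i\in S_1(t)} c_i=\sum_{j\in S_2(t)} a_j$; the analogous identity $\sum_{i\in S_1(t)} d_i=\sum_{j\in S_2(t)} b_j$ holds for the disagreeing cut edges. Substituting into the two sums from the first paragraph gives the desired equality. The main obstacle is simply the case-analysis bookkeeping: when $v_i\in S_1(t)$, the flip of $v_i$ itself reverses the agree/disagree status of every one of its neighbors simultaneously, and one must carefully separate this global effect from the per-neighbor effect of each neighbor's own flip when expressing $\mathit{op}_i(t+1)$ in terms of data available at time $t$.
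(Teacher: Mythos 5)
Your proof is correct. It rests on the same elementary fact as the paper's --- a vertex flips precisely when it lies in $S_1(t)$, so the disagreement status of an edge changes between $t$ and $t+1$ exactly when the edge crosses the cut between $S_1(t)$ and $S_2(t)$ --- but the bookkeeping is organized differently. The paper works edge-centrically: it partitions the disagreeing edges at time $t$ into those inside $S_1(t)$ (the set $A(t)$), inside $S_2(t)$ (the set $B(t)$), and across the cut (the set $C(t)$), observes that $\sum_{i\in S_1(t)}\mathit{op}_i(t)-\sum_{i\in S_2(t)}\mathit{op}_i(t)=2|A(t)|-2|B(t)|$ because the cut term $|C(t)|$ cancels in the subtraction, and then notes that $A$ and $B$ are literally unchanged by the update, since an edge inside $S_1(t)$ has both endpoints flip and an edge inside $S_2(t)$ has neither. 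Your route is vertex-centric: you compute $\mathit{op}_i(t)-\mathit{op}_i(t+1)$ for each vertex in terms of its agreeing and disagreeing cut-neighbors (your formulas $d_i-c_i$ for $v_i\in S_1(t)$ and $b_i-a_i$ for $v_i\in S_2(t)$ check out), and then equate the two sums by double counting the cut edges. The paper's version is slightly shorter because the non-cut edges are disposed of in one stroke ($A(t)=A'(t)$, $B(t)=B'(t)$) and the cut never needs to be examined at all, whereas you must track the agree/disagree split of the cut explicitly; on the other hand, your per-vertex expressions for the change in $\mathit{op}_i$ are exactly the kind of local information that becomes useful when analyzing $\Delta E(t)$ afterwards, so nothing is lost. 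Both arguments are complete and correct.
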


\begin{proof}
From (\ref{eq:energia}) and (\ref{eq:energia2}), we have
\begin{equation}
\label{eq:abreEt}
E(t) = \sum\limits_{i \in S_{1}(t)}^{} \mathit{op}_{i}(t) - \sum\limits_{i \in S_{2}(t)}^{} \mathit{op}_{i}(t) - |S_{1}(t)|k + |S_{2}(t)|k
\end{equation}
and
\begin{equation}
\label{eq:abreEt2}
E'(t) = \sum\limits_{i \in S_{1}(t)}^{} \mathit{op}_{i}(t+1) - \sum\limits_{i \in S_{2}(t)}^{} \mathit{op}_{i}(t+1) - |S_{1}(t)|k + |S_{2}(t)|k.
\end{equation}
Thus, we need to show that
\begin{equation}
\label{eqSets}
\sum\limits_{i \in S_{1}(t)}^{} \mathit{op}_{i}(t) - \sum\limits_{i \in S_{2}(t)}^{} \mathit{op}_{i}(t) = \sum\limits_{i \in S_{1}(t)}^{} \mathit{op}_{i}(t+1) - 
\sum\limits_{i \in S_{2}(t)}^{} \mathit{op}_{i}(t+1).
\end{equation}

First we define the following sets:
\begin{itemize}
\item $A(t) = \{ (v_i,v_j) \in E(G) \mid v_i \in S_{1}(t), v_j \in S_{1}(t), x_i(t) \neq x_j(t) \};$
\item $B(t) = \{ (v_i,v_j) \in E(G) \mid v_i \in S_{2}(t), v_j \in S_{2}(t), x_i(t) \neq x_j(t) \};$
\item $C(t) = \{ (v_i,v_j) \in E(G) \setminus (A(t) \cup B(t)) \mid x_i(t) \neq x_j(t)\}.$
\end{itemize}
Then we note that $\sum\limits_{i \in S_{1}(t)}^{} \mathit{op}_{i}(t) = 2|A(t)| + |C(t)|$  and that $\sum\limits_{i \in S_{2}(t)}^{} \mathit{op}_{i}(t) = 2|B(t)| + |C(t)|$. Therefore,
\begin{equation}
\label{eq:atbt}
\sum\limits_{i \in S_{1}(t)}^{} \mathit{op}_{i}(t) - \sum\limits_{i \in S_{2}(t)}^{} \mathit{op}_{i}(t) = 2|A(t)| - 2|B(t)|.
\end{equation}

Next we define the following sets:
\begin{itemize}
\item $A'(t) = \{ (v_i,v_j) \in E(G) \mid v_i \in S_{1}(t), v_j \in S_{1}(t), x_i(t+1) \neq x_j(t+1) \};$
\item $B'(t) = \{ (v_i,v_j) \in E(G) \mid v_i \in S_{2}(t), v_j \in S_{2}(t), x_i(t+1) \neq x_j(t+1) \};$
\item $C'(t) = \{ (v_i,v_j) \in E(G) \setminus (A'(t) \cup B'(t)) \mid x_i(t+1) \neq x_j(t+1)\}.$
\end{itemize}
In a manner similar to the above, we have
\begin{equation}
\sum\limits_{i \in S_{1}(t)}^{} \mathit{op}_{i}(t+1) - \sum\limits_{i \in S_{2}(t)}^{} \mathit{op}_{i}(t+1) = 2|A'(t)| - 2|B'(t)|.
\end{equation}

We finally note that $A(t) = A'(t)$ and $B(t) = B'(t)$, whence
\begin{equation}
\sum\limits_{i \in S_{1}(t)}^{} \mathit{op}_{i}(t+1) - \sum\limits_{i \in S_{2}(t)}^{} \mathit{op}_{i}(t+1) = 2|A(t)| - 2|B(t)|,
\end{equation}
which completes the proof.
\end{proof}

Let $\Delta E(t)$ be the variation in the energy function from time $t$ to time $t+1$, i.e.,
$\Delta E(t) = E(t+1) - E(t)$.
An important property of the energy function $E(t)$ is given in the following lemma.

\begin{lema}
\label{lema:crescente}
$E(t)$ is a monotonically nondecreasing function.
\end{lema}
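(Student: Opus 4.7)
My plan is to use Lemma~\ref{lema:lema1} as the bridge: since $E(t)=E'(t)$, it suffices to show $E(t+1)\geq E'(t)$. This reduction is attractive because $E(t+1)$ and $E'(t)$ look almost identical; both are sums of $(op_i(t+1)-k)$ or $(k-op_i(t+1))$ terms over all vertices, and they differ only in whether the partition of $V(G)$ into "plus" and "minus" contributors is $\{S_1(t+1),S_2(t+1)\}$ or $\{S_1(t),S_2(t)\}$. Therefore the argument reduces to a vertex-by-vertex comparison of contributions.

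Concretely, I would split the vertices into four cases according to membership in $S_1$ or $S_2$ at times $t$ and $t+1$. For a vertex $v_i\in S_1(t)\cap S_1(t+1)$ or $v_i\in S_2(t)\cap S_2(t+1)$ the contributions to $E(t+1)$ and $E'(t)$ are literally identical, so these vertices cancel out. The two interesting cases are the "switchers":
\begin{itemize}
\item If $v_i\in S_1(t)\cap S_2(t+1)$, its contribution to $E'(t)$ is $op_i(t+1)-k$ while its contribution to $E(t+1)$ is $k-op_i(t+1)$. But $v_i\in S_2(t+1)$ means $op_i(t+1)<k$, so the $E(t+1)$ contribution is strictly positive while the $E'(t)$ contribution is strictly negative.
\item If $v_i\in S_2(t)\cap S_1(t+1)$, symmetrically, its contribution to $E'(t)$ is $k-op_i(t+1)\leq 0$ (since $v_i\in S_1(t+1)$ gives $op_i(t+1)\geq k$) while its contribution to $E(t+1)$ is $op_i(t+1)-k\geq 0$.
\end{itemize}
In both switching cases the $E(t+1)$ contribution dominates, and in the non-switching cases the contributions agree, so summing yields $E(t+1)\geq E'(t)=E(t)$, i.e., $\Delta E(t)\geq 0$.

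I do not anticipate a serious technical obstacle here; the whole argument is essentially a case analysis once the right reformulation $E(t+1)\geq E'(t)$ is in place. The only subtlety to be careful about is the sign bookkeeping in the two switching cases, where one must use the defining inequalities ($op_i(t+1)\geq k$ for $v_i\in S_1(t+1)$ and $op_i(t+1)<k$ for $v_i\in S_2(t+1)$) to assert that the swapped sign actually helps rather than hurts. Notice also that the proof gives a bit more: $\Delta E(t)$ equals twice the sum of $|op_i(t+1)-k|$ over the switching vertices, a quantitative statement that is likely to be useful in the subsequent bounds on transient length.
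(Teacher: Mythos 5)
Your proposal is correct and follows essentially the same route as the paper: reduce $\Delta E(t)$ to $E(t+1)-E'(t)$ via Lemma~\ref{lema:lema1} and then do the same four-way case analysis on membership in $S_1$/$S_2$ at times $t$ and $t+1$, with the same sign bookkeeping in the two switching cases. Your closing observation that $\Delta E(t)$ is twice the sum of $|\mathit{op}_i(t+1)-k|$ over switching vertices is exactly the quantitative fact the paper later exploits in the transient-length bounds.
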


\begin{proof}
By Lemma~\ref{lema:lema1},
\begin{equation}
\Delta E(t) = E(t+1) - E'(t).
\end{equation}
Then
\begin{equation}
\begin{split}
\Delta E(t) = \sum\limits_{i \in S_{1}(t+1)}^{} (\mathit{op}_{i}(t+1) - k) + \sum\limits_{i \in S_{2}(t+1)}^{} (k - \mathit{op}_{i}(t+1)) \\
 - \sum\limits_{i \in S_{1}(t)}^{} (\mathit{op}_{i}(t+1) - k) - \sum\limits_{i \in S_{2}(t)}^{} (k - \mathit{op}_{i}(t+1)).
\end{split}
\end{equation}

Focusing on each vertex's contribution to $\Delta E(t)$ yields:
\begin{itemize}
\item $\mathit{op}_{i}(t+1) - k - \mathit{op}_{i}(t+1) + k = 0$, if $v_i \in S_{1}(t)$ and $v_i \in S_{1}(t+1)$;
\item $k - \mathit{op}_{i}(t+1) - k + \mathit{op}_{i}(t+1) = 0$, if $v_i \in S_{2}(t)$ and $v_i \in S_{2}(t+1)$;
\item $k - \mathit{op}_{i}(t+1) - \mathit{op}_{i}(t+1) + k = 2(k - \mathit{op}_{i}(t+1)) > 0$, since
$k > \mathit{op}_{i}(t+1)$, if $v_i \in S_{1}(t)$ and $v_i \in S_{2}(t+1)$;
 \item $\mathit{op}_{i}(t+1) - k - k + \mathit{op}_{i}(t+1) = 2(\mathit{op}_{i}(t+1) - k) \geq 0$, since
$\mathit{op}_{i}(t+1) \geq k$, if $v_i \in S_{2}(t)$ and $v_i \in S_{1}(t+1)$.
\end{itemize}
The lemma follows from noting that
no vertex contributes negatively to $\Delta E(t)$.
\end{proof}

\section{\boldmath Application of the energy function to the study of the periodic behavior of $k$-reversible processes} \label{results}

As $G$ is finite, $E(t)$ cannot grow indefinitely and there exists a time $t_{\max}$ such that $E(t) = E(t_{\max})$ for all $t \geq t_{\max}$.

\begin{teorema}
\label{teo:periodo}
$p(x(0)) \leq 2$ for any $x(0)$.
\end{teorema}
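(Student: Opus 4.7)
The plan is to chain two monotonicity arguments. First, Lemma~\ref{lema:crescente} tells us that $E(t)$ is nondecreasing, and since $G$ is finite, $E$ is bounded, so there is some $t_{\max}$ beyond which $\Delta E(t)=0$ for every $t$. From this stagnation I want to extract, by re-inspecting the per-vertex contributions in the proof of Lemma~\ref{lema:crescente}, a much stronger structural constraint, and then finish by a second finiteness argument on $V(G)$.

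The concrete steps are as follows. Step~1: since $\Delta E(t)=0$ for $t\geq t_{\max}$, no vertex can contribute positively to $\Delta E(t)$. Looking at the four cases in Lemma~\ref{lema:crescente}, this rules out the possibility that some $v_i$ is in $S_1(t)\cap S_2(t+1)$ (that transition would contribute the strictly positive quantity $2(k-\mathit{op}_i(t+1))$), and it forces any $v_i\in S_2(t)\cap S_1(t+1)$ to satisfy $\mathit{op}_i(t+1)=k$. In particular, $S_1(t)\subseteq S_1(t+1)$ for all $t\geq t_{\max}$. Step~2: because $V(G)$ is finite, this ascending chain must stabilize; there is a time $t^*\leq t_{\max}+n$ after which $S_1(t)$ is constant, and hence so is $S_2(t)$. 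Step~3: for $t\geq t^*$, each vertex is permanently in either $S_1$ or $S_2$. A vertex permanently in $S_1$ flips at $t\to t+1$ and flips again at $t+1\to t+2$, returning to its original state; a vertex permanently in $S_2$ does not flip at all. In both cases $x_i(t+2)=x_i(t)$, and collecting over all vertices yields $x(t+2)=x(t)$, i.e., $p(x(0))\leq 2$.

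The main obstacle is that the energy becoming constant at $t_{\max}$ is by itself not enough: $\Delta E(t)=0$ still permits borderline $S_2\to S_1$ transitions with $\mathit{op}_i(t+1)=k$, and a single such transition at a vertex destroys the two-periodicity there (the vertex does not flip from $t$ to $t+1$ but does flip from $t+1$ to $t+2$). The crux of the proof is therefore Step~2: one must observe that the energy-stabilized dynamics still satisfies the weaker monotonicity $S_1(t)\subseteq S_1(t+1)$, and then invoke the finiteness of $V(G)$ a second time to promote this to the honest stabilization $S_1(t)=S_1(t+1)$ that is needed to conclude.
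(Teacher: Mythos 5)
Your proof is correct, and it rests on the same engine as the paper's --- the per-vertex case analysis from Lemma~\ref{lema:crescente}, which after $t_{\max}$ forbids $S_1(t)\cap S_2(t+1)$ transitions and hence gives $S_1(t)\subseteq S_1(t+1)$ --- but you package the conclusion differently. The paper argues by contradiction: assuming $p(x(0))>2$ and working at a time $t_0$ beyond \emph{both} $t_{\max}$ and the transient $\tau(x(0))$, it deduces that some vertex must still make a one-time $S_2\to S_1$ jump (with $\mathit{op}_i(t+1)=k$) and then remain in $S_1$ forever, so that vertex only becomes periodic after $t_0$, contradicting $t_0>\tau(x(0))$. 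You instead run a direct argument: a second appeal to finiteness stabilizes the ascending chain $S_1(t_{\max})\subseteq S_1(t_{\max}+1)\subseteq\cdots$ within $n$ steps, after which every vertex either flips at every step or never flips, giving $x(t+2)=x(t)$ outright. What your route buys is self-containment: you never need to invoke the transient length $\tau(x(0))$ or its defining properties, and as a bonus you get an explicit handle on when two-periodicity sets in ($t^*\leq t_{\max}+n$), which is in the spirit of the paper's subsequent transient bounds. The paper's route, by contrast, leans on the prior knowledge that a finite deterministic system is eventually periodic and extracts the contradiction from the minimality of $\tau$. Both are sound; yours is arguably the cleaner of the two.
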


\begin{proof}
Let $t_{0}$ be such 
that $t_0 > \tau(x(0))$ and $t_0 > t_{\max}$. Also, suppose that $p(x(0)) > 2$. For all $t > t_{0}$, there cannot be a vertex $v_{i}$ such that 
$v_{i} \in S_{1}(t)$ and 
$v_{i} \in S_{2}(t+1)$, since this would lead to $\Delta E(t) > 0$, contradicting $t > t_{\max}$ by Lemma~\ref{lema:crescente}.
Likewise, it cannot be the case that all vertices are in both $S_{1}(t)$ and $S_{1}(t+1)$, or in both $S_{2}(t)$ and $S_{2}(t+1)$, since these cases represent periodic behaviors of
length at most $2$, which contradicts our supposition. Hence, in order for inequality $p(x(0)) > 2$ to hold, there exists at least one vertex $v_{i}$ such that $v_{i} \in S_{2}(t)$ 
and $v_{i} \in S_{1}(t+1)$, and thus it holds that $x_i(t) \neq x_i(t+2)$. That is, there exists a vertex $v_{i}$ such that $\mathit{op}_{i}(t) < k$ and $\mathit{op}_{i}(t+1) = k$, 
otherwise the energy function would increase.

As $v_i \in S_{1}(t+1)$, then necessarily $v_{i} \in S_{1}(t+2)$, $v_{i} \in S_{1}(t+3)$, $v_{i} \in S_{1}(t+4)$, and so on. Otherwise, if $v_i \in S_{2}(t')$ for some $t' > t +1$, then
again the energy function would increase, contradicting the assumption that $t > t_{\max}$. Thus, $x_i(t) \neq x_i(t+2)$ and $x_i(t'+2) = x_i(t')$ for every
$t' > t$. Hence, $v_i$ reaches a periodic behavior only at time $t > t_{0}$, contradicting $t_{0} > \tau(x(0))$. It follows that only vertices with periodic behavior of length $2$ can exist,
thence the theorem.
\end{proof}

The transient length of a $k$-reversible process is closely related to $E(t_{\max})$, since $\Delta E(t) = 0$ for all $t \geq \tau(x(0))$, otherwise
it would be true that $x(t+2) \neq x(t)$. If $t < \tau(x(0))$, then $\Delta E(t) = 0$ only if there is at least one vertex $v_{i}$ such that $v_{i} \in S_2(t)$, $v_{i} \in
S_{1}(t+1)$, and $\mathit{op}_{i}(t+1) = k$. Thus, the energy function may remain unchanged for at most $n$ consecutive steps during the transient phase. Consider a configuration $x(t')$ and
the associated $E(t')$. This value of the energy function would be the same for exactly $n$ consecutive steps if we had $|S_1(t')| = 1$, $|S_2(t')| = n-1$, and the following 
conditions were true:
\begin{itemize}
\item $S_1(t') \varsubsetneq S_1(t'+1) \varsubsetneq \cdots  \varsubsetneq S_1(t'+n-1)$;
\item $S_2(t') \varsupsetneq S_2(t'+1) \varsupsetneq \cdots \varsupsetneq S_2(t'+n-1)$;
\item If $v_i \in S_2(j) \cap S_1(j+1)$, then $\mathit{op}_i(j+1) = k$.
\end{itemize}

\begin{teorema}
\label{teo:trans}
$\tau(x(0)) \leq E(t_{\max}) + n - 1$ for any $x(0)$.
\end{teorema}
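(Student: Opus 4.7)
The plan is to split the transient into the segment where $E$ is still rising and the terminal segment during which $E$ has already attained its maximum value, then bound each piece. Let $t_{\max}$ denote the first time at which $E(t_{\max})$ is attained; since the discussion before the theorem shows $\Delta E(t) = 0$ for every $t \geq \tau(x(0))$, the maximum is reached no later than $\tau(x(0))$, so $t_{\max} \leq \tau(x(0))$.

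First I would control the terminal segment $[t_{\max}, \tau(x(0)))$. By the definition of $t_{\max}$, every transition in this interval satisfies $\Delta E(t) = 0$, and all of them lie in the transient. Invoking the observation stated just before the theorem, each such transition forces at least one vertex to cross from $S_2(t)$ into $S_1(t+1)$ with $\mathit{op}_i(t+1) = k$ and forbids any vertex from leaving $S_1$; hence $|S_1(t)|$ is strictly increasing on this segment. Combining $|S_1(t)| \geq 1$ for every $t < \tau(x(0))$ (otherwise no vertex flips, $x(t+1) = x(t)$, and the process is already a fixed point, contradicting $t < \tau(x(0))$) with $|S_1(\tau(x(0)))| \leq n$ yields $\tau(x(0)) - t_{\max} \leq n - 1$.

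Second I would bound $t_{\max}$ by $E(t_{\max})$. Because $E$ is integer-valued, nondecreasing by Lemma~\ref{lema:crescente}, and starts at $E(0) \geq 0$, the telescoping identity $E(t_{\max}) - E(0) = \sum_{t=0}^{t_{\max}-1} \Delta E(t)$ gives $t_{\max} \leq E(t_{\max})$ as soon as each summand is at least $1$. Adding the two bounds then produces $\tau(x(0)) \leq E(t_{\max}) + (n - 1)$.

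The main obstacle is therefore the hidden claim that no stagnation occurs strictly before $t_{\max}$, ruling out a ``rise, pause, rise again'' pattern inside the transient. To close this gap I would exploit the very rigid local shape of a stagnation in the transient catalogued in the bulleted list just before the theorem: a strictly nested chain of $S_1$ sets together with the exact equality $\mathit{op}_i(t+1) = k$ for every vertex joining $S_1$. I would then argue that once this regime starts, the only exits are $|S_1| = n$ (which pushes the process into its $2$-cycle and terminates the transient) or $|S_1| = 0$ (a fixed point), neither of which is compatible with $E$ subsequently rising inside the transient. With that in hand, the two bounds combine to complete the proof.
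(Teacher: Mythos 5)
There is a genuine gap, and it is exactly the one you flag yourself: the claim that $\Delta E(t) \geq 1$ for every $t < t_{\max}$, i.e.\ that no stagnation occurs before the energy reaches its maximum. Your proposed repair does not work. A zero-energy step at time $t < \tau(x(0))$ does force $S_1(t) \subsetneq S_1(t+1)$ with every entering vertex satisfying $\mathit{op}_i(t+1) = k$, but this constrains only the single transition from $t$ to $t+1$; nothing prevents the very next step from having a vertex leave $S_1$ (contributing $2(k - \mathit{op}_i(t+2)) > 0$) or enter it with $\mathit{op} > k$, so the energy can perfectly well rise again after a pause. The bulleted list before the theorem describes the \emph{extremal} scenario in which the energy stays flat for $n$ consecutive steps; it is not a "regime" that, once entered, can only terminate in $|S_1| = n$ or $|S_1| = 0$. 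Consequently your decomposition $\tau \leq t_{\max} + (n-1)$ with $t_{\max} \leq E(t_{\max})$ collapses: rise--pause--rise patterns put zero-energy steps strictly before $t_{\max}$, and your accounting has no budget for them. (Your treatment of the terminal segment $[t_{\max}, \tau)$ is fine, and the telescoping bound would be fine if its premise held; the middle is the problem.)

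The paper closes this hole with an amortized charging argument rather than a two-phase decomposition. Every zero-energy step in the transient has a witness vertex $v_i \in S_2(t) \cap S_1(t+1)$ with $\mathit{op}_i(t+1) = k$. Either $v_i$ has been in $S_2$ at every time from $0$ through $t$ --- and each vertex can play this role at most once, giving at most $|S_2(0)| \leq n-1$ such "free" steps --- or $v_i$ previously left $S_1$ at some $t' < t$, and that exit contributed an increase of at least $2$ to $\Delta E(t')$. These charges land on distinct (vertex, exit-time) pairs, so the non-free stagnation steps, together with the strictly increasing steps, are jointly paid for by the total energy gain $E(t_{\max}) - E(0) \leq E(t_{\max})$. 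This yields $\tau(x(0)) \leq E(t_{\max}) + |S_2(0)| \leq E(t_{\max}) + n - 1$ without ever needing stagnation to be confined to the end of the transient. If you want to salvage your outline, you would have to replace your second bound by some version of this interleaving-tolerant accounting.
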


\begin{proof}
In order for the energy function to remain unchanged within a time step, by the argument in the proof of Lemma~\ref{lema:crescente}
it is necessary that at least one vertex $v_i$ exists for which $v_i \in S_{2}(t)$, $v_i \in S_{1}(t+1)$,
and $\mathit{op}_i(t+1) = k$. We have two cases:
\begin{itemize}
\item $v_i$ belongs to the sets $S_2(0), S_2(1), \ldots, S_2(t)$.
\item There is a time $t'$ such that $0 < t' < t $ and $v_i \in S_1(t'), v_i \in S_2(t'+1), v_i \in S_2(t'+2), \ldots, v_i \in S_2(t)$.
\end{itemize}

Only in the second case does vertex $v_i$ contribute to increase the energy function from its value at some time prior to $t$,
since the transition from $S_1(t')$ to $S_2(t'+1)$ implies such an increase, again as in the proof of Lemma~\ref{lema:crescente}. Thus, except for the vertices
for which the first case holds, all other vertices that help to keep the energy function constant at time $t$ were responsible for increasing it at a previous time step by at least $2$.
So, apart from the first-case vertices, the increase in the energy function is, on average, of at least $1$ per time step. But
the first case can only hold for a vertex once, at time $t$. After $t$, any contribution from a vertex to keep the energy function constant is necessarily preceded by an increase of at least $2$.
Hence, the transient length can be at most $E(t_{\max}) + |S_{2}(0)|$. However, if $|S_{2}(0)| = n$ then $x(0)$ is already a periodic configuration. The theorem follows by using $|S_2(0)|<n$.
\end{proof}

\begin{corolario}
\label{cor:transLimite}
$\tau(x(0)) \leq n(\Delta(G) + 1) - 1$ for any $x(0)$.
\end{corolario}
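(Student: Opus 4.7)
The plan is to combine Theorem~\ref{teo:trans}, which already gives $\tau(x(0)) \leq E(t_{\max}) + n - 1$, with a simple per-vertex upper bound on $E(t_{\max})$. Concretely, I will argue that the contribution of each vertex to $E(t)$ at every time $t$ is at most $\Delta(G)$, so that $E(t_{\max}) \leq n\,\Delta(G)$, and substituting this into Theorem~\ref{teo:trans} yields $\tau(x(0)) \leq n\,\Delta(G) + n - 1 = n(\Delta(G)+1) - 1$.

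First I would dispose of the degenerate case $k > \Delta(G)$ separately: since $\mathit{op}_i(t) \leq \deg(v_i) \leq \Delta(G) < k$ holds for every vertex and every $t$, no vertex ever satisfies the update rule, the configuration is frozen at $x(0)$, and $\tau(x(0)) = 0$ trivially satisfies the claim. From this point on I may assume $k \leq \Delta(G)$.

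For the remaining case I would read off per-vertex bounds directly from the two sums in (\ref{eq:energia}). If $v_i \in S_1(t)$, its contribution is $\mathit{op}_i(t) - k \leq \deg(v_i) - k \leq \Delta(G) - k \leq \Delta(G)$. If $v_i \in S_2(t)$, its contribution is $k - \mathit{op}_i(t) \leq k \leq \Delta(G)$. Summing the $n$ contributions gives $E(t) \leq n\,\Delta(G)$ uniformly in $t$, hence $E(t_{\max}) \leq n\,\Delta(G)$, and Theorem~\ref{teo:trans} concludes the proof.

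The only subtle point, and thus the main potential obstacle, is precisely the degenerate case $k > \Delta(G)$: there the bound $k - \mathit{op}_i(t) \leq k$ is too weak to yield $\Delta(G)$ per vertex. That case is handled in a single line by noticing the process never updates, so it presents no real obstruction; everything else is routine arithmetic on the definition of $E(t)$ together with an application of Theorem~\ref{teo:trans}.
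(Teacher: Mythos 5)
Your proof is correct and follows essentially the same route as the paper: dismiss the case $k>\Delta(G)$, bound each vertex's contribution to $E(t)$ by $\Delta(G)$ to get $E(t_{\max})\leq n\Delta(G)$, and apply Theorem~\ref{teo:trans}. You merely spell out the per-vertex case analysis and the frozen-configuration argument that the paper leaves implicit.
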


\begin{proof}
We only need to consider the case of $k \leq \Delta(G)$. The maximum value of $E(t)$ is at most $n\Delta(G)$, when 
$k = \Delta(G)$ and $|S_2(t)| = n$. The corollary follows directly from Theorem~\ref{teo:trans}.
\end{proof}

For the cases in which $2k > \Delta(G)$, this bound can be improved.

\begin{corolario}
\label{cor:transLimiteMelhor}
If $2k > \Delta(G)$, then $\tau(x(0)) \leq n(k + 1) - 1$ for any $x(0)$.
\end{corolario}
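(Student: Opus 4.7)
The plan is to invoke Theorem~\ref{teo:trans}, which gives $\tau(x(0)) \leq E(t_{\max}) + n - 1$, and then to refine the bound on $E(t_{\max})$ using the extra hypothesis $2k > \Delta(G)$. It therefore suffices to show that $E(t) \leq nk$ at every time $t$, whence the corollary follows by a single addition.

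To obtain this tighter bound I would estimate the contribution of each vertex separately to the sum defining $E(t)$. Since $V(G)$ is partitioned into $S_1(t)$ and $S_2(t)$, each vertex contributes exactly one of the two terms in~(\ref{eq:energia}). A vertex $v_i \in S_1(t)$ contributes $\mathit{op}_i(t) - k$, which is at most $\Delta(G) - k$; the hypothesis $2k > \Delta(G)$ turns this into a quantity strictly less than $k$. A vertex $v_i \in S_2(t)$ contributes $k - \mathit{op}_i(t) \leq k$, with equality only when $\mathit{op}_i(t) = 0$. Either way, every vertex contributes at most $k$, so $E(t) \leq nk$, and Theorem~\ref{teo:trans} delivers $\tau(x(0)) \leq nk + n - 1 = n(k+1) - 1$.

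Conceptually this is a direct refinement of Corollary~\ref{cor:transLimite}: the only new observation is that when $2k > \Delta(G)$ the $S_1$-side per-vertex bound $\Delta(G) - k$ is dominated by the $S_2$-side per-vertex bound $k$, so the worst case per vertex drops from $\Delta(G)$ to $k$. Because $S_1(t)$ and $S_2(t)$ form an exhaustive partition and the two bounds do not interact, I do not anticipate any real obstacle; the only subtle step is ensuring that $\mathit{op}_i(t) \leq \deg(v_i) \leq \Delta(G)$ is the right upper bound to combine with the hypothesis, which it is.
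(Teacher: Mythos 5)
Your proposal is correct and matches the paper's argument: the paper likewise deduces the corollary from Theorem~\ref{teo:trans} together with the observation that $E(t)\leq nk$ when $2k>\Delta(G)$, which your per-vertex estimate ($\mathit{op}_i(t)-k\leq\Delta(G)-k<k$ on $S_1(t)$ and $k-\mathit{op}_i(t)\leq k$ on $S_2(t)$) justifies in exactly the intended way.
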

 
\begin{proof}
The corollary follows directly from Theorem~\ref{teo:trans} by noting
that $E(t)\leq nk$ when $2k > \Delta(G)$.
\end{proof}

We proceed by studying the case of trees.

\begin{teorema}
If $G$ is a tree, then $E(t_{\max}) = nk$.
\label{teo:nk}
\end{teorema}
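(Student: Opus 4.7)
The plan is to establish $E(t) \leq nk$ at every $t \geq 0$ when $G$ is a tree; since $E$ is nondecreasing by Lemma~\ref{lema:crescente} and the all-equal-state configuration is a fixed point with $E\equiv nk$, this is exactly the content of the theorem (read as the extremal value of $E(t_{\max})$ over all $x(0)$).

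First, I would rewrite $E(t)$ using the edge-counting decomposition from the proof of Lemma~\ref{lema:lema1}. Substituting (\ref{eq:atbt}) into (\ref{eq:abreEt}) yields
\begin{equation*}
E(t)=k\bigl(|S_{2}(t)|-|S_{1}(t)|\bigr)+2|A(t)|-2|B(t)|.
\end{equation*}
Since $|S_{1}(t)|+|S_{2}(t)|=n$, a direct rearrangement gives
\begin{equation*}
nk-E(t)=2k|S_{1}(t)|+2|B(t)|-2|A(t)|,
\end{equation*}
so the desired inequality $E(t)\leq nk$ reduces to $|A(t)|\leq k|S_{1}(t)|+|B(t)|$; since $|B(t)|\geq 0$, it in fact suffices to prove $|A(t)|\leq k|S_{1}(t)|$.

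Here the tree hypothesis enters through a single elementary observation. Every edge in $A(t)$ has both endpoints in $S_{1}(t)$, so $A(t)$ is contained in the edge set of the subgraph induced by $S_{1}(t)$ in $G$. When $G$ is a tree, that induced subgraph is a forest on $|S_{1}(t)|$ vertices, and hence has at most $|S_{1}(t)|-1$ edges (vacuously $0$ when $S_{1}(t)=\emptyset$). Because $k\geq 1$, this yields $|A(t)|\leq|S_{1}(t)|-1\leq k|S_{1}(t)|$, closing the argument.

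I do not anticipate a major technical obstacle: the tree assumption is used in exactly one place, namely to bound $|A(t)|$ via the forest fact, and the rest is algebraic bookkeeping inherited from Lemma~\ref{lema:lema1}. The subtlety is conceptual, in identifying that $A(t)$ lives entirely inside $G[S_{1}(t)]$ and pushing the sharper bound through $k\geq 1$. Once combined with Theorem~\ref{teo:trans}, the inequality $E(t_{\max})\leq nk$ delivers $\tau(x(0))\leq n(k+1)-1$ for every tree, strengthening Corollary~\ref{cor:transLimiteMelhor} by removing the hypothesis $2k>\Delta(G)$.
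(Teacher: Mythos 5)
Your proposal is correct and follows essentially the same route as the paper: both rewrite $E(t)$ via the edge decomposition $E(t)=2(|A(t)|-|B(t)|)+k(|S_2(t)|-|S_1(t)|)$ from Lemma~\ref{lema:lema1}, invoke the tree hypothesis only to bound $|A(t)|$ by $|S_1(t)|-1$ (the forest edge count on $G[S_1(t)]$), drop $|B(t)|\geq 0$, and observe that the all-equal-state configuration attains $nk$. Your single inequality chain merely streamlines the paper's case analysis on $|S_1(t)|=0$ versus $|S_1(t)|>0$.
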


\begin{proof}
Using (\ref{eq:abreEt}) and (\ref{eq:atbt}) from the proof of Lemma~\ref{lema:lema1}, we obtain
\begin{equation}
E(t) = 2(|A(t)| - |B(t)|) + k(|S_2(t)| - |S_1(t)|).
\end{equation}

If $|S_2(t)| = n$ and $|S_1(t)| = 0$, then in order to maximize $E(t)$ we need to assume that all vertices in $S_2(t)$ have the same state at time $t$, and
consequently $|B(t)| = 0$. As $S_1(t)$ is empty, $|A(t)| = 0$ and
\begin{equation}
E(t) = 2(|A(t)| - |B(t)|) + k(|S_2(t)| - |S_1(t)|) = nk.
\end{equation}

Now assume that $|S_1(t)| = w$ and $|S_2(t)| = n-w$, with $0 < w < n$. When all vertices in $S_2(t)$ are in the same state at time $t$, yielding $|B(t)| = 0$, and moreover
$|A(t)| = |S_1(t)| - 1$, we have
\begin{equation}
E(t) \leq 2(|A(t)| - |B(t)|) + k(|S_2(t)| - |S_1(t)|) = kn + w(2-2k) - 2.
\end{equation}

Note that $E(t) > nk$ if and only if $w(2-2k) > 2$. However, for any $k \geq 1$, $w(2-2k) \leq 0$. Hence, whenever $w > 0$, there is no configuration $x(t)$ such that $E(t) > nk$. It follows
that the maximum value of the energy function in a tree occurs when all vertices have the same state, thence the theorem.
\end{proof}

\begin{corolario}
\label{cor:tree}
If $G$ is a tree, then $\tau(x(0)) \leq n(k+1) - 1$ for any $x(0)$.
\end{corolario}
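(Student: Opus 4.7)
The plan is essentially a one-line composition of two results that have just been established. Theorem~\ref{teo:trans} gives the general bound $\tau(x(0)) \leq E(t_{\max}) + n - 1$, valid for any graph $G$, while Theorem~\ref{teo:nk} identifies $E(t_{\max})$ precisely for trees, namely $E(t_{\max}) = nk$. Substituting the latter into the former immediately yields $\tau(x(0)) \leq nk + n - 1 = n(k+1) - 1$, which is exactly the inequality claimed.

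So there is really no separate argument to carry out: the first step is to invoke Theorem~\ref{teo:nk} to fix the value of $E(t_{\max})$ under the tree hypothesis, and the second step is to plug that value into the bound supplied by Theorem~\ref{teo:trans}. No casework on $k$ versus $\Delta(G)$ is needed here (unlike in Corollaries~\ref{cor:transLimite} and \ref{cor:transLimiteMelhor}), because for trees the maximum energy has been computed exactly rather than merely bounded by $n\Delta(G)$ or $nk$.

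Since the whole argument is algebraic substitution, there is no genuine obstacle to anticipate. The only thing to double-check is that the hypothesis of Theorem~\ref{teo:nk} (the graph is a tree) is the same hypothesis assumed in the corollary, which it is, and that the bound in Theorem~\ref{teo:trans} is stated for arbitrary $G$, which it is. Thus the proof will consist of at most two or three lines.
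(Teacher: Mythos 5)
Your proposal is correct and matches the paper's proof exactly: the paper likewise derives the corollary by combining Theorem~\ref{teo:trans} with the value $E(t_{\max}) = nk$ from Theorem~\ref{teo:nk}. Nothing further is needed.
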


\begin{proof}
The corollary follows directly from Theorems~\ref{teo:trans} and~\ref{teo:nk}.
\end{proof}

Corollary~\ref{cor:tree} is a clear improvement on Corollary~\ref{cor:transLimite}. As $k$ is a constant, the upper bound
on the length of the transient is linear in the number of vertices.

\section{Conclusions}\label{concl}

We have presented an energy function that is useful to study the periodic behavior of $k$-reversible processes. We have shown that this function is monotonic, and using 
this fact we have presented and alternative proof that the maximum period length is $2$ for such processes, and also that the transient length is $O(n\Delta(G))$ in the general case and $O(nk)$
in the case of trees.

For $2$-reversible processes on trees, we conjecture that the sharp upper bound on the transient length is $n-3$, provided $n \geq 5$.
We give support to this conjecture in Appendix~\ref{app:suppl}.

\section*{Acknowledgments}

The authors acknowledge partial support from CNPq, CAPES, and FAPERJ BBP grants.

\bibliographystyle{plain}
\bibliography{efunction}

\appendix
\section{Supplementary material}\label{app:suppl}
\begin{conjectura}
For $2$-reversible processes on trees with $n \geq 5$, the sharp upper bound on the transient length is $n-3$.
\end{conjectura}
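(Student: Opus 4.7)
The conjecture has two directions: the upper bound $\tau(x(0))\leq n-3$ for every $2$-reversible process on a tree with $n\geq 5$, and its sharpness. I would address them separately.

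\textbf{Sharpness.} I would exhibit the family $Y_n$ of trees consisting of the path $v_1v_2\cdots v_{n-1}$ together with one additional leaf $v_n$ attached to $v_2$, and the initial configuration that alternates $+,-,+,-,\ldots$ along the path with $x_n(0)=x_2(0)$. Direct simulation (which I have verified for $n=6,7,8,9$) produces a travelling wave of flips that moves from $v_2$ toward $v_{n-1}$ and settles, in exactly $n-3$ steps, into a monochromatic path together with the discordant leaf $v_n$. An induction on $n$ formalises the pattern, the key observation being that the extra leaf at $v_2$ keeps $v_2\in S_1$ alive at each step, so that the alternating wave must propagate fully to the far end of the path before stabilisation.

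\textbf{Upper bound.} Since Corollary~\ref{cor:tree} yields only the loose $3n-1$, a genuine refinement of the energy-function argument is required. My approach is strong induction on $n$, with base cases $n\in\{5,6,7\}$ verified by finite enumeration. For the inductive step, fix a leaf $\ell$ of $T$ with neighbour $u$ and set $T'=T-\ell$. The dynamics on $T$ and $T'$ coincide except possibly at $u$, whose $\mathit{op}$-value in $T$ exceeds that in $T'$ by $0$ or $1$ according as $x_\ell(0)$ agrees or disagrees with $x_u(t)$. I would split on whether $\ell$'s presence actually alters $u$'s flipping schedule during the transient. In the \emph{inert} case the induction hypothesis on $T'$ gives $\tau_T\leq\tau_{T'}\leq(n-1)-3=n-4<n-3$. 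In the \emph{active} case I would invoke Lemma~\ref{lema:crescente}, arguing that each additional flip induced by $\ell$ must be compensated by an increase of at least $2$ in $E(t)$, so that the ceiling $E(t)\leq 2n$ from Theorem~\ref{teo:nk} caps the extra transient length by a small additive constant.

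\textbf{Main obstacle.} The inductive step is delicate because removing $\ell$ can collapse $u$ from an internal vertex into a leaf of $T'$, where $u$ cannot flip at all; the comparison of the two dynamics then becomes awkward. A workaround is to induct on the number of non-leaf vertices instead, or to abandon induction entirely and design a single structural potential $\Phi(t)$ --- perhaps combining $E(t)$ with a count of maximal monochromatic subtrees of $T$ --- that decreases by at least $1$ per transient step and is bounded above by $n-3$. Identifying such a $\Phi$ is the true crux of the conjecture: my computations of the number of ``active'' edges (those whose endpoints disagree in state) on the sharp examples suggest the right potential is closely related to this quantity, but the active-edge count itself fails to be monotone in general, since one can engineer a transient step at a high-degree vertex during which it strictly increases. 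Hence tree-specific refinements exploiting the unique-path property between any two vertices, together with the fact that leaves are permanently in $S_2(t)$, will be essential for closing the gap.
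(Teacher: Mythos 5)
First, a point of comparison: the paper does not prove this statement at all. It is stated as a conjecture, supported only by exhaustive enumeration for $n\leq 20$ and by Algorithm~\ref{alg:conjectura}, which generates the trees and initial configurations observed to attain $\tau(x(0))=n-3$. So there is no proof in the paper for your attempt to match, and your proposal should be judged on whether it closes the conjecture on its own. It does not, and you say as much yourself; the useful thing is to pin down exactly where it stands. Your sharpness construction is correct and is, up to relabeling and reflection of the path, precisely the first tree $T_1$ output by Algorithm~\ref{alg:conjectura} (a path on $n-1$ vertices with one extra leaf at the second vertex from one end, alternating states along the path, the extra leaf agreeing with its neighbor); direct simulation confirms $\tau=n-3$ for small $n$, and the inductive formalization you sketch is routine. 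One small descriptive slip: the stabilization front actually freezes the path from the far end back toward $v_2$, not outward from $v_2$, but this does not affect the claim.

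The genuine gap is the upper bound, and your own sketch does not survive scrutiny even as a plan. In the leaf-removal induction, the inert case is fine, but in the active case the inductive hypothesis gives $\tau_{T'}\leq n-4$, so you need the presence of $\ell$ to cost at most \emph{one} extra transient step; an argument that bounds the overhead by ``a small additive constant'' proves nothing unless that constant is exactly $1$, and you give no mechanism forcing this. Moreover, the compensation argument you invoke --- each extra flip paid for by an increase of at least $2$ in $E(t)$, capped by $E(t_{\max})=2n$ from Theorem~\ref{teo:nk} --- is exactly the accounting that already underlies Theorem~\ref{teo:trans} and Corollary~\ref{cor:tree}, and it only yields $3n-1$; reusing it inside an induction cannot, by itself, produce the factor-of-three improvement down to $n-3$. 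Your fallback of a decreasing potential $\Phi$ is the right instinct (note that $E(t)$ itself is \emph{nondecreasing} by Lemma~\ref{lema:crescente}, so it cannot play this role directly), but you concede that the natural candidate, the number of discordant edges, fails to be monotone, and you do not exhibit a replacement. That missing potential, or an equivalently sharp step-by-step accounting on trees, is precisely the open crux; the paper leaves it open as well, offering only the computational evidence and the extremal family.
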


We have confirmed this conjecture computationally by exhaustive enumeration for $n\leq 20$.
Throughout these experiments, a pattern emerged regarding the trees and initial configurations for which a transient length of exactly $n-3$ was verified.
This pattern is such that the upper bound is achieved by exactly $\frac{n}{2}$ trees for $n$ even and exactly $\frac{n-1}{2} - 1$ trees for $n$ odd, always for
exactly one initial configuration of the tree in question.
Such trees and configurations can be generated by Algorithm~\ref{alg:conjectura}, whose results are shown in
Figure~\ref{fig:arvs1} for $n=8$ and in Figure~\ref{fig:arvs2} for $n = 9$.

\begin{algorithm}[p]
    \SetKwInput{Input}{Input}\SetKwInput{Output}{Output}
    \label{alg:conjectura}
    \caption{Generate all trees and corresponding initial configurations leading to a transient length of $n-3$ in a $2$-reversible process.}
    \Input{Number $n$ of vertices.}
    \Output{Trees $T_j$, each with the corresponding initial configuration $X_j$ such that $\tau(X_j) = n-3$.}
    \BlankLine
    \Begin{
        $V \leftarrow \{v_1, v_2, \dots, v_n\}$\;
        $E \leftarrow \emptyset$\;
	$j \leftarrow 1$\;
        \For{$i \leftarrow 1$ to $n-2$}
	{
	  $E \leftarrow E \cup (v_i,v_{i+1})$\;
	  \If{$i = n-2$}
	  {
	    $E \leftarrow E \cup (v_{i},v_{i+2})$\;
	  }
	}

	\For{$i \leftarrow 1$ to $n$}
	{
	  \If{$i$ is odd}
	  {
	    $x[i] \leftarrow +1$\;
	  }
	  \Else
	  {
	    $x[i] \leftarrow -1$\;
	  }
	}
	$T_j \leftarrow G(V,E)$\;
	$X_j \leftarrow x$\;
	$j \leftarrow j + 1$\;
        \For{$i \leftarrow 3$ to $n-3$}
	{
	  \If{$i$ is odd}
	  {
	    $E \leftarrow E \setminus (v_i,v_{i-1})$\;
	    $E \leftarrow E \cup (v_{i-1}, v_{i+1})$\;
	    $T_j \leftarrow G(V,E)$\;
	    $X_j \leftarrow x$\;
	    $j \leftarrow j + 1$\;
	  }
	  \If{$i = n-3$ and $n$ is even}
	  {
	    $E \leftarrow E \setminus (v_{i+1}, v_{i+3})$\;
	    $E \leftarrow E \cup (v_i, v_{i+3})$\;
	    $T_j \leftarrow G(V,E)$\;
	    $X_j \leftarrow x$\;
	    $j \leftarrow j + 1$\;
	  }
	}
    }
\end{algorithm}

\begin{figure}[p]
 \centering
\begin{tikzpicture}
  [scale=.7,auto=left, inner sep=2pt]
  \node[draw,fill=lightgray,circle] (n1)  at (0,0) {$v_{1}$};
  \node[draw,fill=white,circle] (n2)  at (1,0) {$v_{2}$};
  \node[draw,fill=lightgray,circle] (n3)  at (2,0) {$v_{3}$};
  \node[draw,fill=white,circle] (n4)  at (3,0) {$v_{4}$};
  \node[draw,fill=lightgray,circle] (n5)  at (4,0) {$v_{5}$};
  \node[draw,fill=white,circle] (n6)  at (5,0) {$v_{6}$};
  \node[draw,fill=lightgray,circle] (n7)  at (5,-1) {$v_{7}$};  
  \node[draw,fill=white,circle] (n8)  at (6,0) {$v_{8}$}; 
  \foreach \from/\to in {n1/n2, n2/n3, n3/n4, n4/n5, n5/n6, n6/n7, n6/n8}
  \draw (\from) -- (\to);
\end{tikzpicture}
\\
\vspace{2em}
\begin{tikzpicture}
  [scale=.7,auto=left,inner sep=2pt]
  \node[draw,fill=lightgray,circle] (n1)  at (1,0) {$v_{1}$};
  \node[draw,fill=white,circle] (n2)  at (2,0) {$v_{2}$};
  \node[draw,fill=lightgray,circle] (n3)  at (3,-1) {$v_{3}$};
  \node[draw,fill=white,circle] (n4)  at (3,0) {$v_{4}$};
  \node[draw,fill=lightgray,circle] (n5)  at (4,0) {$v_{5}$};
  \node[draw,fill=white,circle] (n6)  at (5,0) {$v_{6}$};
  \node[draw,fill=lightgray,circle] (n7)  at (5,-1) {$v_{7}$};  
  \node[draw,fill=white,circle] (n8)  at (6,0) {$v_{8}$}; 
  \foreach \from/\to in {n1/n2, n2/n4, n3/n4, n4/n5, n5/n6, n6/n7, n6/n8}
  \draw (\from) -- (\to);
\end{tikzpicture}
\\
\vspace{2em}
\begin{tikzpicture}
  [scale=.7,auto=left,inner sep=2pt]
  \node[draw,fill=lightgray,circle] (n1)  at (2,0) {$v_{1}$};
  \node[draw,fill=white,circle] (n2)  at (3,0) {$v_{2}$};
  \node[draw,fill=lightgray,circle] (n3)  at (4,-1) {$v_{3}$};
  \node[draw,fill=white,circle] (n4)  at (4,0) {$v_{4}$};
  \node[draw,fill=lightgray,circle] (n5)  at (5,1) {$v_{5}$};
  \node[draw,fill=white,circle] (n6)  at (5,0) {$v_{6}$};
  \node[draw,fill=lightgray,circle] (n7)  at (5,-1) {$v_{7}$};  
  \node[draw,fill=white,circle] (n8)  at (6,0) {$v_{8}$}; 
  \foreach \from/\to in {n1/n2, n2/n4, n3/n4, n4/n6, n5/n6, n6/n7, n6/n8}
  \draw (\from) -- (\to);
\end{tikzpicture}
\\
\vspace{2em}
\begin{tikzpicture}
  [scale=.7,auto=left,inner sep=2pt]
  \node[draw,fill=lightgray,circle] (n1)  at (2,0) {$v_{1}$};
  \node[draw,fill=white,circle] (n2)  at (3,0) {$v_{2}$};
  \node[draw,fill=lightgray,circle] (n3)  at (4,-1) {$v_{3}$};
  \node[draw,fill=white,circle] (n4)  at (4,0) {$v_{4}$};
  \node[draw,fill=lightgray,circle] (n5)  at (6,0) {$v_{5}$};
  \node[draw,fill=white,circle] (n6)  at (5,0) {$v_{6}$};
  \node[draw,fill=lightgray,circle] (n7)  at (5,-1) {$v_{7}$};  
  \node[draw,fill=white,circle] (n8)  at (7,0) {$v_{8}$}; 
  \foreach \from/\to in {n1/n2, n2/n4, n3/n4, n4/n6, n5/n6, n5/n8, n6/n7}
  \draw (\from) -- (\to);
\end{tikzpicture}
\caption{Output of Algorithm~\ref{alg:conjectura} for $n = 8$ ($T_1$ through $T_4$, from top to bottom). Shaded circles
indicate state $+1$, empty circles indicate state $-1$.}
\label{fig:arvs1}
\vspace{0.5in}
\begin{tikzpicture}
  [scale=.7,auto=left,inner sep=2pt]
  \node[draw,fill=lightgray,circle] (n1)  at (0,0) {$v_{1}$};
  \node[draw,fill=white,circle] (n2)  at (1,0) {$v_{2}$};
  \node[draw,fill=lightgray,circle] (n3)  at (2,0) {$v_{3}$};
  \node[draw,fill=white,circle] (n4)  at (3,0) {$v_{4}$};
  \node[draw,fill=lightgray,circle] (n5)  at (4,0) {$v_{5}$};
  \node[draw,fill=white,circle] (n6)  at (5,0) {$v_{6}$};
  \node[draw,fill=lightgray,circle] (n7)  at (6,0) {$v_{7}$};  
  \node[draw,fill=white,circle] (n8)  at (7,0) {$v_{8}$};
  \node[draw,fill=lightgray,circle] (n9)  at (6,-1) {$v_{9}$};  
  \foreach \from/\to in {n1/n2, n2/n3, n3/n4, n4/n5, n5/n6, n6/n7, n7/n8, n7/n9}
  \draw (\from) -- (\to);
\end{tikzpicture}
\\
\vspace{2em}
\begin{tikzpicture}
  [scale=.7,auto=left,inner sep=2pt]
  \node[draw,fill=lightgray,circle] (n1)  at (1,0) {$v_{1}$};
  \node[draw,fill=white,circle] (n2)  at (2,0) {$v_{2}$};
  \node[draw,fill=lightgray,circle] (n3)  at (3,-1) {$v_{3}$};
  \node[draw,fill=white,circle] (n4)  at (3,0) {$v_{4}$};
  \node[draw,fill=lightgray,circle] (n5)  at (4,0) {$v_{5}$};
  \node[draw,fill=white,circle] (n6)  at (5,0) {$v_{6}$};
  \node[draw,fill=lightgray,circle] (n7)  at (6,0) {$v_{7}$};  
  \node[draw,fill=white,circle] (n8)  at (7,0) {$v_{8}$};
  \node[draw,fill=lightgray,circle] (n9)  at (6,-1) {$v_{9}$};  
  \foreach \from/\to in {n1/n2, n2/n4, n3/n4, n4/n5, n5/n6, n6/n7, n7/n8, n7/n9}
  \draw (\from) -- (\to);
\end{tikzpicture}
\\
\vspace{2em}
\begin{tikzpicture}
  [scale=.7,auto=left,inner sep=2pt]
  \node[draw,fill=lightgray,circle] (n1)  at (2,0) {$v_{1}$};
  \node[draw,fill=white,circle] (n2)  at (3,0) {$v_{2}$};
  \node[draw,fill=lightgray,circle] (n3)  at (4,-1) {$v_{3}$};
  \node[draw,fill=white,circle] (n4)  at (4,0) {$v_{4}$};
  \node[draw,fill=lightgray,circle] (n5)  at (5,-1) {$v_{5}$};
  \node[draw,fill=white,circle] (n6)  at (5,0) {$v_{6}$};
  \node[draw,fill=lightgray,circle] (n7)  at (6,0) {$v_{7}$};  
  \node[draw,fill=white,circle] (n8)  at (7,0) {$v_{8}$};
  \node[draw,fill=lightgray,circle] (n9)  at (6,-1) {$v_{9}$};  
  \foreach \from/\to in {n1/n2, n2/n4, n3/n4, n4/n6, n5/n6, n6/n7, n7/n8, n7/n9}
  \draw (\from) -- (\to);
\end{tikzpicture}
\caption{Output of Algorithm~\ref{alg:conjectura} for $n = 9$ ($T_1$ through $T_3$, from top to bottom). Shaded circles
indicate state $+1$, empty circles indicate state $-1$.}
\label{fig:arvs2}
\end{figure}

\end{document}